\documentclass[11pt]{article}

\usepackage[margin=1in]{geometry}

\usepackage[T1]{fontenc}
\usepackage[utf8]{inputenc}

\usepackage{amsmath,amssymb,amsthm}

\usepackage{graphicx}%
\usepackage{multirow}%
\usepackage{mathrsfs}%
\usepackage[title]{appendix}%
\usepackage{xcolor}%
\usepackage{textcomp}%
\usepackage{manyfoot}%
\usepackage{booktabs}%
\usepackage{algorithm}%
\usepackage{algorithmicx}%
\usepackage{algpseudocode}%
\usepackage{listings}%

\usepackage[colorlinks=true,%
            linkcolor=blue,%
            citecolor=blue,%
            urlcolor=blue]{hyperref}

\def\bA{{\boldsymbol A}}
\def\bB{{\boldsymbol B}}
\def\bF{{\boldsymbol F}}
\def\bX{{\boldsymbol X}}
\def\bx{{\boldsymbol x}}
\def\by{{\boldsymbol y}}
\def\bw{{\boldsymbol w}}
\def\bp{{\boldsymbol p}}

\def\cK{{\mathcal K}}
\def\cL{{\mathcal L}}
\def\bbR{{\mathbb R}}

\newcommand{\bbet}{{\boldsymbol \beta}}
\newcommand{\bepsilon}{{\boldsymbol \epsilon}}
\newcommand{\blambda}{{\boldsymbol \lambda}}

\def\bxi{{\boldsymbol \xi}}
\def\btau{{\boldsymbol \tau}}

\newtheorem{theorem}{Theorem}

\newtheorem{remark}{Remark}

\title{An Entropic Factor Model for Robust Portfolio Replication}

\author{%
  Argimiro Arratia\thanks{Department of Computer Science, Universitat Polit\`ecnica de Catalunya, Jordi Girona s/n, Barcelona 08034, Spain. Email: \texttt{argimiro.arratia@upc.edu}. ORCID: 0000-0003-1551-420X}%
  \and 
  Henryk Gzyl\thanks{Center for Finance, IESA School of Business, Caracas 1010, Venezuela. Email: \texttt{henryk.gzyl@iesa.edu.ve}. ORCID: 0000-0002-3781-8848}%
}

\date{}

\begin{document}

\maketitle

\begin{abstract}
Portfolio replication, or the construction of a tradable basket of assets to match the risk-return 
profile of a target benchmark,  
is fundamentally an ill-posed inverse problem. 
When restricted to a subset of available assets, classical variance-minimizing models often yield
 unstable, over-leveraged portfolios highly vulnerable to market shocks. 
 %%%
We propose a unified, two-stage methodology rooted in information theory to achieve robust portfolio replication. First, we model the constituent asset returns against target factors, estimating parameters within data-driven empirical bounds via an entropy minimization principle. Second, using the same entropic approach, we determine the optimal weight replication. In both cases we use an entropy function of the Fermi-Dirac type defined directly on sets of constraints of the inverse problem. 
%%%
   We validate this Entropic Factor Model (EFM) against standard Ordinary Least Squares (OLS) across five numerical experiments, including standard equity tracking, multi-asset synthesis, and severe stress-test scenarios.
   Empirical results demonstrate that the EFM consistently outperforms OLS in terms of 
   annualized turnover and net-of-fees returns. Crucially, during the COVID-19 market crash and under severe idiosyncratic data corruption, the entropic framework acts as a probabilistic 
   ``circuit breaker", defensively reducing capital allocation to compromised assets 
   and providing a highly robust, risk-averse solution for generalized portfolio replication.
\end{abstract}

\noindent\textbf{Keywords:}
Portfolio replication;  Maximum Entropy;  Factor model;  index tracking;  OLS.

\medskip
\noindent\textbf{MSC Classification: 91G10, 90C25, 90C47, 62J05, 62H25, 68T20}

\medskip
\noindent\textbf{JEL Classification: G11, G12, C14, C58, C61}

% Main text
\section{Introduction}

The ability to accurately replicate the financial characteristics of a target instrument using a given collection of available assets is a key task of quantitative finance. 
This generalized problem of portfolio replication extends far beyond the passive tracking of market-capitalization indices;
 it is the mathematical foundation for hedging complex liabilities, constructing synthetic multi-asset benchmarks, and engineering Factor Mimicking Portfolios (FMPs) 
 that allow investors to trade non-investable macroeconomic indicators. 
 
 Formally, defining a replicating portfolio amounts to assign weights $\bw = \{w_{1}, \dots, w_{N}\}$  
  to a given collection of assets with random returns $\{X_{1}, \dots, X_{N}\},$
 so that the portfolio return  $\sum_{n=1}^{N} w_{n}X_{n}$   matches the risk-return 
profile of the return of some target benchmark portfolio of  $K$ factors $F_1$, \ldots, $F_K$.

We decompose this problem into two stages:
\begin{enumerate}
    \item \textbf{Factor Extraction:} Determining the sensitivity (Beta) of each asset to the benchmark factors.
    \item \textbf{Portfolio Synthesis:} Finding the weights $\bw$ that match the benchmark's factors exposure.
\end{enumerate}

The first stage consists of an over-determined, ill-posed linear problem. Even though the matrix defining the system may be of full rank, the problem has additive errors in the data and box constraints upon the unknowns.
The second stage is also an ill-posed inverse problem, because the number of assets $N$ exceeds the number of factors $K$ ($N > K$), and there are box constraints upon the solution as well. 
 Standard approaches, such as minimizing the variance of the replication error via Ordinary Least Squares (OLS) regression or standard Quadratic Programming, often result in corner solutions or extreme weights due to the `error maximization' property of the optimizer \cite{michaud1989}, and its high sensitivity to input parameters \cite{best1991}.

To address these limitations, we propose a unified methodology, rooted in information theory, that applies a 
single entropy minimization procedure framework based on a Fermi-Dirac-like entropy function. 
We use a sign convention contrary to that used by physicists, and more in line with convex risk-minimization.

As a robust alternative to standard risk-minimization, the Principle of Maximum Entropy~\cite{jaynes}
offers a rigorous framework to process incomplete information without introducing arbitrary assumptions.  
 In the first stage, rather than fitting factor exposures via least squares, we adapt this entropic framework to estimate the parameters of a constrained factor model within data-driven, 
 empirical bounds.
 In the second stage, we apply the exact same entropic minimization to solve the inverse problem of determining the optimal replication weights. Operating as a unified framework, this entropic procedure acts as a natural penalization mechanism against epistemic uncertainty, ensuring that the optimizer does not assert false conviction in the presence of noisy or contradictory data.

We validate this Entropic Factor Model (EFM) through empirical analysis across five distinct numerical experiments, demonstrating its versatility as a generalized replication engine.
 First, a 
 standard equity index tracking scenario, consisting of partial replication of the S\&P 500, 
 the EFM achieves precision comparable to traditional methods but with reduced mean bias and superior directional stability. 
 Second, when tasked with synthesizing a complex, multi-asset universe incorporating both
  equities and cryptocurrencies, the model exhibits robust structural identification, selectively allocating weights based on pure factor exposure rather than indiscriminately smearing capital.
  Third, under a simulated ``flash crash'' stress-test, 
   the entropic framework demonstrates an inherent risk-aversion that standard models lack. Confronted with massive data corruption in a single asset, the model acts as a probabilistic 
   circuit breaker, recognizing the resulting volatility as structural uncertainty and drastically cutting exposure to the compromised asset.
   Fourth, partial replication under trading frictions, where a rolling-window backtest in a standard, benign market regime proves that the EFM achieves lower annualized turnover than OLS.
 Fifth and final,   a real market stress test based on replication of tradable SPY ETF, during the 2020 market crash produced by COVID-19. In this case  the EFM produced superior net returns and lower volatility compared to OLS.
  
The remainder of this paper is organized as follows. Section \ref{sec1} describes some related work and motivations.  Section  \ref{sec2} formally defines the portfolio replication problem and explains the two-stage entropy minimization methodology. 
%detailing the Fermi-Dirac-like formulation for portfolio weights.
Section \ref{sec3} outlines the mechanics of the empirical bounds used to constrain the factor loadings and the noise envelope. Section \ref{sec4} presents five  numerical experiments, providing a comparative analysis of the entropic model against standard least squares across standard replication, multi-asset synthesis, and severe stress-test regimes. Finally, Section \ref{sec5}  offers concluding remarks.

\section{Related Work and Motivation}\label{sec1}

The construction of a replicating portfolio
is a fundamental challenge in quantitative finance. 
While often discussed narrowly in the context of passive equity index tracking~\cite{roll92}, 
the scope of portfolio replication is vastly broader. 
It extends to the creation of Factor Mimicking Portfolios (FMPs) for exact arbitrage pricing~\cite{HKS87} 
and the engineering of economic tracking portfolios that allow investors to trade non-investable macroeconomic indicators, 
such as inflation or GDP growth~\cite{lamont}.

Because the replicating portfolio typically contains fewer assets than the broader market universe, and the constituent assets themselves exhibit high multicollinearity, 
determining the optimal capital allocation is mathematically an ill-posed inverse problem~\cite{illposed}. 
Standard variance-minimizing models, such as Ordinary Least Squares (OLS) or unconstrained Quadratic Programming, are highly sensitive to this multicollinearity. 
They often yield unstable, over-leveraged ``corner solutions'' that degrade significantly out-of-sample. 

To stabilize this inverse problem, the financial literature has heavily utilized norm-based regularization. Techniques such as $L_1$ (Lasso) and $L_2$ (Ridge) penalties have been 
widely applied to construct optimal sparse portfolios for generalized replication~\cite{FPW15} and specific high-dimensional index tracking \cite{BFP18}. 
More recently, the field has seen a surge in deep learning, graph-theoretic, and econophysics-based architectures aimed at capturing non-linear market characteristics and clustering asset structures. 
For instance, \cite{ZS24} demonstrated the use of one-dimensional Pointwise Convolutional Autoencoders combined with Shapley Additive Explanations (SHAP) to select assets and replicate indices during high-volatility regimes. 
In parallel, network topology and noise-filtering approaches have emerged to isolate robust asset interactions; Rubio-Garc{\'\i}a et al.~\cite{rubio2024} utilized graph filtering and network centrality metrics for sparse portfolio selection, while Grassetti~\cite{grassetti2025} combined Random Matrix Theory (RMT) filtering with Louvain community detection and eigenvalue centrality to mitigate correlation noise in index tracking.

Although these advanced neural architectures, graph-theoretic filters, and norm-based regularizers successfully mitigate unconstrained overfitting, enforcing realistic physical bounds or box constraints
 (e.g., position caps, leverage limits, or bounded noise domains) introduces distinct structural and operational challenges. 
In non-stationary return environments, bounded $L_1/L_2$ estimators rely on rolling cross-validation to tune penalty weights, exposing the model to hyperparameter instability and 
look-ahead noise. Furthermore, standard norm penalties shrink parameter estimates toward an uninformative zero baseline, ignoring the specific domain geometry defined by 
economic boundaries. From a computational point of view, solving box-constrained $L_p$ problems via 
active-set or coordinate-descent methods causes parameter estimates to jump abruptly onto and off hard boundaries 
under minor sampling noise. This boundary ``sloshing'' across rebalancing windows directly inflates 
portfolio turnover and execution friction. Simultaneously, deep neural architectures and multi-step heuristic pipelines remain 
computationally intensive, opaque ``black boxes'' whose continuous error-minimization objectives force the model to absorb extreme idiosyncratic outliers rather than bounding noise within explicit empirical envelopes.

As a robust, interpretable alternative, the Principle of Maximum Entropy, 
originally formulated in statistical mechanics by Jaynes~\cite{jaynes}, 
offers a rigorous framework to process incomplete information without introducing arbitrary assumptions. 
In quantitative finance, this principle has been leveraged to estimate risk-neutral probability distributions from option prices \cite{BK96} 
and to encourage natural portfolio diversification by avoiding extreme asset concentrations typical of mean-variance optimization \cite{AAHGSM24,BP08}. 

Our methodology builds upon this entropic tradition but uniquely extends it into a unified framework for the entire portfolio replication pipeline. 
Rather than utilizing entropy merely as an objective penalty for weight diversification, we treat both the structural factor loading estimation and the subsequent weight allocation 
as constrained inverse problems governed by information theory. 
This specific approach is a direct financial application of our recent mathematical work on solving ill-posed linear equations under strict constraints, in which we demonstrated that a Fermi-Dirac-like entropy function successfully regularizes ill-posed linear inverse problems with box constraints \cite{AAHG25}. 

By applying these entropic procedures, first to the robust estimation of asset factor exposures (Stage 1) and subsequently to the portfolio tracking weights (Stage 2), 
we provide a generalized replication engine. 
Unlike machine learning models that require vast amounts of training data to learn anomaly detection, the Entropic Factor Model (EFM) 
embeds risk-aversion directly into its physical bounds. 
Confronted with contradictory data, it acts as a probabilistic circuit breaker, inherently penalizing epistemic uncertainty during periods of severe data corruption.

\section{The entropic procedure}\label{sec2}

We propose a two-stage entropic framework to construct the replicating portfolio. Both stages are modeled as linear inverse problems with convex constraints of the form:
\begin{equation} \label{gen1}
    \bA \bxi = \by, \quad \bxi \in \mathcal{K}
\end{equation}
where $\bA$ is an $M \times N$ design matrix, $\by \in \mathbb{R}^M$ is the vector of constraints, and $\bxi \in \mathbb{R}^N$ is the vector of unknowns we wish to recover. 
The set $\mathcal{K} = \prod_{n=1}^N [a_n, b_n]$ defines the box constraints 
for each component of $\bxi$.
 Because these systems are typically underdetermined $(N>M)$, there are infinite solutions. 
 We select the unique solution $\bxi^*$ that minimizes the smooth convex function
 
\begin{equation}\label{obj}
\Psi(\bxi) = \sum_{j=1}^N\frac{\xi_j-a_j}{b_j - a_j}\ln\left(\frac{\xi_j-a_j}{b_j - a_j}\right) + \frac{b_j-\xi_j}{b_j - a_j}\ln\left(\frac{b_j-\xi_j}{b_j - a_j}\right)
\end{equation}

In other words, we restate  \eqref{gen1} as:

\begin{equation}\label{prob4}
\mbox{Find}\;\;\;\bxi^*=argmin\{\Psi(\bxi)| \bxi\in\cK;\,\bA\bxi=\by\}.
\end{equation}

The function $\Psi$ happens to be the Lagrange-Fenchel dual of the moment generating function  
(a Fermi-Dirac entropy function)
\begin{equation}\label{mgf}
\Phi(\btau) =  \sum_{j=1}^N \ln\bigg(e^{a_j\tau_j}+e^{b_j\tau_j}\bigg),\;\;\btau\in\bbR^{N},
\end{equation}
which is the logarithm of the Laplace transform of a measure that puts unit mass at each corner of $\cK.$ 
A simple computation yields:

\begin{gather}
\frac{\partial\Psi(\bxi)}{\partial \xi_j} = \frac{1}{b_j-a_j}\ln\bigg(\frac{\xi_j-a_j}{b_j-\xi_j}\bigg),\label{der1.1}\\
\frac{\partial \Phi(\btau)}{\partial \tau_j} = \frac{a_je^{\tau_j a_j}+b_je^{\tau_j b_j}}
{e^{\tau_j a_j}+e^{\tau_j b_j}}. \label{der1.2}
\end{gather}

With this, it is easy to establish that the functions $\Psi$ and $\Phi$ are related by:

\begin{equation}\label{FD1}
\begin{aligned}
\Psi(\bxi) = \sup\left\{\langle\bxi,\btau\rangle - \Phi(\btau)| \btau\in\bbR^N\right\}, \;\;\;\bxi\in\cK.\\
\Phi(\btau) = \sup\left\{ \langle\btau,\bxi\rangle - \Psi(\bxi) | \bxi\in\cK\right\},\;\;\;\btau\in\bbR^{N}.
\end{aligned}
\end{equation}

Both $\Phi(\btau)$ and $\Psi(\bxi)$ are convex, $\Psi$ is infinitely differentiable in the interior  of $\cK$, $int(\cK)$, 
and for $\btau\in\bbR^{N}$ the equation $\btau = \nabla\Psi(\bx)$ has a unique solution $\bx$ in the interior of $\cK$. Furthermore, the following identity holds:

\begin{equation}\label{FD2}
\nabla_\btau \Phi(\btau) = (\nabla_\bx\Psi)^{-1}(\btau),\;\;\;\mbox{whenever}\;\;\;\nabla_\bx\Psi(\bx) = \btau.
\end{equation}
That is, the gradients are inverse functions of each other. 
%%The interested reader may consider \cite{BL} for details about convexity, duality and all that. 

%which is given as:
%
%\begin{equation}\label{FD1}
%\Psi(\xi) = \sup_{\tau}\big(\langle\xi,\tau\rangle - \Phi(\tau)\big).
%\end{equation}
%
%Thus, 

\medskip
Now, to solve \eqref{prob4}, form the Lagrangian:
\begin{equation}\label{lagr}
\cL(\bxi, \blambda) = \Psi(\bxi)-\langle\lambda,(\bA\bxi-\by)\rangle,
\end{equation}
and differentiate with respect to $\bxi$ and $\blambda$ to obtain:

\begin{gather}
\nabla_{\bxi}\Psi(\bxi^*)-\bA^t\blambda^*=0,\;\;\label{sol1.1}\\
\bA\bxi^*-\by=0\;\;\;\label{sol1.2}
\end{gather}

Taking \eqref{FD2} into account we obtain:
\begin{equation}\label{repsol0}
\bxi^*=(\nabla_{\bxi}\Psi)^{-1}(\bA^t\blambda^*)=\nabla_{\btau}\Phi(\bA^t\blambda^*)
\end{equation}

This chain of equations prove the following theorem: 

\begin{theorem}\label{main}
Let $\Psi(\bx)$ and $\Phi(\btau)$ be related to each other as in \eqref{FD1}. Suppose that $\bA^t\blambda\in int(\cK)$
for any $\blambda\in\bbR^M.$  Then the solution to \eqref{prob4} is given by:

\begin{equation}\label{repsol1}
\xi_j^*= \frac{a_je^{a_j(\bA^t\blambda^*)_j} + b_je^{b_j(\bA^t\blambda^*)_j}}{e^{a_j(\bA^t\blambda^*)_j}+e^{b_j(\bA^t\blambda^*)_j}},\;\;\;\;\;\;j = 1, \ldots, N.
\end{equation}

Here $\blambda^* \in\bbR^{M}$ is the point at which $\Sigma(\blambda,\by) \equiv \langle\blambda,\by\rangle-\Phi(\bA\blambda)$ achieves its maximum value. 
$(\bA^t\blambda^*)_j$ is the $j$-th component of the vector $\bA^t\blambda^*$.
Also
\begin{equation}\label{dual1}
\Psi(\bxi^*) = \Sigma(\blambda^*,\by).
\end{equation}
\end{theorem}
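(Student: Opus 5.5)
The plan is a standard Fenchel (Lagrangian) duality argument, organized around the concave dual function $\Sigma(\blambda,\by)=\langle\blambda,\by\rangle-\Phi(\bA^t\blambda)$. Here $\Phi(\bA\blambda)$ in the statement is read as $\Phi(\bA^t\blambda)$, since only the latter makes dimensional sense.

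\textbf{Weak duality.} For any feasible $\bxi\in\cK$ with $\bA\bxi=\by$ and any $\blambda\in\bbR^M$, the second line of \eqref{FD1} gives Fenchel--Young:
\[
\Phi(\bA^t\blambda)\ge\langle \bA^t\blambda,\bxi\rangle-\Psi(\bxi).
\]
Since $\langle\bA^t\blambda,\bxi\rangle=\langle\blambda,\by\rangle$, this rearranges to $\Psi(\bxi)\ge\Sigma(\blambda,\by)$. So every dual value bounds every primal value from below.

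\textbf{The dual has a maximizer.} $\Sigma$ is concave and smooth, because $\Phi$ is a sum of log-sum-exp terms. Its gradient is
\[
\nabla_{\blambda}\Sigma=\by-\bA\nabla_{\btau}\Phi(\bA^t\blambda).
\]
The main obstacle is showing this gradient actually vanishes somewhere, i.e.\ that the supremum of $\Sigma$ is attained. The hypothesis of the theorem is meant to supply this, and I would use it in the form of a Slater-type condition: $\by=\bA\bx_0$ for some $\bx_0\in int(\cK)$.

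\begin{itemize}
\item First I would show $\Sigma$ is coercive, i.e.\ $\Sigma(\blambda,\by)\to-\infty$ as $|\blambda|\to\infty$ in directions not in $\ker\bA^t$. Along a ray $\blambda=s\bmu$, the asymptotic slope of $\Phi(\bA^t s\bmu)$ is the support function of $\cK$, namely $\sum_j\max(a_j(\bA^t\bmu)_j,\,b_j(\bA^t\bmu)_j)$. This strictly exceeds $\langle\bmu,\by\rangle=\langle\bA^t\bmu,\bx_0\rangle$ whenever $\bA^t\bmu\neq0$, because $\bx_0$ is interior.
\item On $\ker\bA^t$ the function $\Sigma$ is constant, so I would restrict to its orthogonal complement. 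Coercivity plus continuity then gives a maximizer $\blambda^*$.
\item Strict convexity of $\Phi$ makes $\bA^t\blambda^*$, and hence the resulting $\bxi^*$, unique.
\end{itemize}

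\textbf{The primal minimizer and strong duality.} At $\blambda^*$ the first-order condition reads $\bA\nabla_{\btau}\Phi(\bA^t\blambda^*)=\by$. Define $\bxi^*=\nabla_{\btau}\Phi(\bA^t\blambda^*)$; componentwise this is exactly \eqref{repsol1} by \eqref{der1.2}.
\begin{itemize}
\item Each component is a strict convex combination of $a_j$ and $b_j$, so $\bxi^*\in int(\cK)$. Together with the first-order condition, $\bxi^*$ is feasible.
\item By \eqref{FD2}, $\nabla\Psi(\bxi^*)=\bA^t\blambda^*$, so \eqref{sol1.1} holds.
\item Equality holds in Fenchel--Young exactly when $\btau=\nabla\Psi(\bxi)$. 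Hence $\Psi(\bxi^*)=\langle\bA^t\blambda^*,\bxi^*\rangle-\Phi(\bA^t\blambda^*)=\Sigma(\blambda^*,\by)$, which is \eqref{dual1}.
\end{itemize}
Combined with weak duality, $\Psi(\bxi^*)=\Sigma(\blambda^*,\by)\le\Psi(\bxi)$ for every feasible $\bxi$, so $\bxi^*$ solves \eqref{prob4}. It is the unique solution because $\Psi$ is strictly convex.
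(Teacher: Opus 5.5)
Your argument is correct, but it runs in the opposite direction from the paper's.

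\textbf{The paper's route.} The paper's proof is a formal Lagrangian derivation. It writes the stationarity conditions \eqref{sol1.1}--\eqref{sol1.2}, inverts $\nabla\Psi$ through \eqref{FD2} to get $\bxi^*=\nabla_{\btau}\Phi(\bA^t\blambda^*)$, and hence obtains \eqref{repsol1} from \eqref{der1.2}. It then observes that \eqref{sol1.2} says $\blambda^*$ is a critical point of $\Phi(\bA^t\blambda)-\langle\by,\blambda\rangle$. This is a necessary-condition argument. It presupposes that the minimizer is interior and admits a multiplier, never shows that $\blambda^*$ exists, and does not actually verify \eqref{dual1}. What it buys is brevity and a transparent account of where \eqref{repsol1} comes from.

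\textbf{Your route.} You go dual-first:
\begin{itemize}
\item existence of $\blambda^*$ by coercivity;
\item construction of $\bxi^*$ from $\blambda^*$, with feasibility and interiority for free, since $\nabla\Phi$ maps into $int(\cK)$;
\item \eqref{dual1} from the equality case of Fenchel--Young;
\item optimality and uniqueness from weak duality and strict convexity of $\Psi$.
\end{itemize}
This gives a complete proof: existence, uniqueness, the duality identity, and no a priori interiority assumption. The cost is the coercivity step.

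\textbf{The hypothesis.} You are right not to use the stated hypothesis literally. Since $\cK$ is bounded, $\bA^t\blambda\in int(\cK)$ for all $\blambda$ forces $\bA=0$, and then the conclusion fails whenever $\by\neq 0$. Your Slater condition $\by\in\bA(int(\cK))$ is the correct replacement. It is also necessary, because \eqref{repsol1} always produces a point of $int(\cK)$ satisfying $\bA\bxi^*=\by$.

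\textbf{A small tightening.} Instead of passing from ray-wise asymptotic slopes to coercivity, use the pointwise bound $\ln(e^{a\tau}+e^{b\tau})\ge\max(a\tau,b\tau)$. This gives $-\Sigma(\blambda,\by)\ge\delta\|\bA^t\blambda\|_1$ with $\delta=\min_j\min(x_{0,j}-a_j,\,b_j-x_{0,j})>0$. That yields uniform coercivity on $(\ker\bA^t)^\perp$ directly.
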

\begin{proof}
From \eqref{repsol0} and \eqref{der1.2} it is clear that \eqref{repsol1} holds, 
and substituting in \eqref{sol1.2} it is clear that
\begin{equation}\label{inter1}
\bA\bxi^*-\by=\bA\nabla_{\btau}\Phi(\bA^t\blambda^*)-\by =\nabla_{\blambda}\bigg(\Phi(\bA^t\blambda)-\langle\by,\blambda\rangle)\bigg)|_{\lambda^*}=0
\end{equation}

\end{proof}

 The optimal solution  $\blambda^*$ can be obtained  
 %minimizing the dual potential function 
%$\Sigma(\lambda, y) := \langle\lambda,y\rangle-M(A^t\lambda)$
 using convex optimization techniques (e.g., Newton-Raphson or L-BFGS methods), as the dual problem is unconstrained and strictly convex~\cite{cvx}.
Most numerical software packages~\footnote{See \url{https://metrumresearchgroup.github.io/bbr/} for example, which combines the usual gradient method with a step reduction procedure at each iteration. This is convenient because the objective function may be very flat near the minimum} are written to solve a minimization problem by default, thus instead of maximizing $\Sigma(\blambda,\by),$ it is convenient to minimize $-\Sigma(\blambda,\by).$

Since conjugate gradient methods make use of the gradient of $\Sigma,$ it may be useful to have the computation at hand. The gradient of $\Sigma$ is
\begin{equation}\label{grad}
\frac{\partial \Sigma}{\partial \lambda_i} = \sum_{j=1}^N A_{i,j}\frac{a_je^{a_j(\bA^t\blambda)_j} + b_je^{b_j(\bA^t\blambda)_j}}{e^{a_j(\bA^t\blambda)_j}+e^{b_j(\bA^t\blambda)_j}} -y_i,\;\;\;\mbox{for}\;\;\;i=1, \ldots,M.
\end{equation}

\subsection{The reconstruction error}\label{recerr}

 When one solves numerically, the solution $\bxi^*$ need not satisfy $\bA\bxi^*=\by$ exactly. The reconstruction error just measures how large is the offset with respect to the problem data. In methods that use $\|\bA\bxi-\by\|^2$ as the objective function,  the value of the objective function at the optimum, namely $\|\bA\bxi^*-\by\|^2,$ is at the same time a measure of the reconstruction error. In our approach the minimum value $\Psi(\bxi^*)$ does not measure the quality of the reconstruction error. Nevertheless, we know from Theorem \ref{main} that:

\begin{equation}\label{error}
\|\nabla_{\blambda}\Sigma(\blambda,\by)\| = \|\bA\bxi^*(\blambda^*)-\by\|,
\end{equation}
 reaches its smallest value at $\blambda^*.$  The left hand side of \eqref{error} is the gradient of $\Sigma.$ If this value is zero, it means that the constraint is satisfied exactly. Actually, a preassigned size of this error is used as a halting criterion for the iterative minimization procedure: Once the norm of \eqref{error} is smaller than a preassigned amount (we use $10^{-5}$) the algorithm stops.

\section{The entropic solution to the portfolio replication problem}\label{sec3}
\subsection{Stage 1:  Factor Model Estimation}

Given the vector $\bX=(X_1, \ldots, X_N)$ of asset returns and the vector of factor returns 
$\bF=(F_1, \ldots, F_K)$, we posit the problem of modeling $\bX$ by the factors $\bF$ as follows:

\begin{gather}
\mbox{Find} \;\;\bbet_0\in[L_0,U_0]^N,\;\;\bepsilon\in\prod_{n=1}^N[-d_n,d_n],\;\mbox{and}\;\;\beta_{i,j}\in[L_{i,j},U_{i,j}]\;\; (1\le i\le N,\; 1\le j \le K)\;\;\nonumber\\ %\label{prob1.1}\\
\mbox{such that}\;\;\;\bX=\beta_0 + \bB\bF + \bepsilon.\;\;\;\label{prob1.2}  
\end{gather}
where $\bB = [\beta_{i,j}]_{(i,j) \in N\times K}$. 
The interpretation is standard. 
Denote by the usual $E[V]$ the expected value of a random variable $V$. 
 Take the expected value of both sides of \eqref{prob1.2}:
$$E[\bX]=\bbet_0+\bB E[\bF].$$
To interpret the coefficients and the entries of $\bB,$ note that
\begin{equation}\label{ders}
\frac{\partial E[X_i]}{\partial E[F_j]} = \beta_{i,j}, \;\;i=1, \ldots, N;\;\;\; j=1, \ldots, K.
\end{equation}
Therefore $\bbet_{0,i}$ is the part of the mean $E[X_i]$ not attributable to the factors. 
And this is why we can choose the means of the idiosyncratic factor $E[\bepsilon]=0.$ 
The $\bepsilon_i$ captures the variability of  $X_i$ not attributable to the variability of the factors.
 Each element $\beta_{i,j}$ (i.e., the $i$-th asset beta coefficient relative to the $j$-th factor, or factor loading) capture the influence, measured by $\partial E[X_i]/\partial E[F_j]$, of the factors. We  use this interpretation to establish the values of box constraints upon these coefficients.

The usual procedure to solve  %\eqref{prob1.1}-
\eqref{prob1.2} is to use a sample sequence of $T$ returns
for each of the variables, $X_i$, $i=1, \ldots, N$,  and for $F_j$, $j=1, \ldots, K$, 
and  determine $\beta_{0,i}$, $\beta_{i,j}$ and $\epsilon_i(t)$, such that 
$\beta_{0,i}\in[L_0,U_0],$ $\beta_{i,j}\in[L_{i,j},U_{i,j}]$ and $\epsilon_i(t) \in[-E_i,E_i]$ which satisfy:
\begin{equation}\label{E1}
X_i(t) = \beta_{0,i} + \sum_{j}\beta_{i,j} F_j(t) + \epsilon_i(t); \;\;\;t =1, \ldots,T.
\end{equation}
Now we think of $\bF$ as a $T\times K$ matrix, each column a factor $F_j$ and each row the $t$-th sample $F_j(t)$
of the factors.
Notice that now \eqref{prob1.2} has become a family of problems that can be solved in parallel.

To frame this as an entropic inverse problem ($\bA\bxi = \by$), we treat  the coefficients $\beta_{0,i}$, 
$\beta_{i,1}, \ldots, \beta_{i,K}$ and $\epsilon_i(t)$ ($t= 1, \ldots, T$) as unknowns to be determined simultaneously.
\begin{itemize}
    \item \textbf{The unknowns ($\bxi$):} The vector $\bxi$ is the concatenation of the coefficients and the error terms:
    \begin{equation}
        \bxi = [\beta_{0,i}\, \beta_i \, \epsilon_i]^t  \in \mathbb{R}^{1+K + T}
    \end{equation}
    where $\beta_i = (\beta_{i,1}, \ldots, \beta_{i,K})$ is the $i$-th row of the matrix $\bB$, and $\epsilon_i = (\epsilon_i(1), \ldots, \epsilon_i(T))$.
    \item \textbf{The design matrix ($\bA$):} The matrix $\bA$ is constructed by concatenating a column of 1 of length $T$, and augmenting the factor matrix ${\bF}$ with an identity matrix $I_T$ of size $T \times T$:
    \begin{equation}\label{eqfin}
        \bA = \begin{bmatrix} 1 & {\bF} & I_T \end{bmatrix} \in \mathbb{R}^{T \times (1+K + T)}
    \end{equation}
    \item \textbf{The target ($\by$):} The target vector is simply the observed asset returns of $X_i$.
\end{itemize}

The system becomes $[1 \,\, {\bF} \,\, I_T] \begin{bmatrix} \beta_{0,i} \\ \beta_i \\ \epsilon_i \end{bmatrix} = X_i$, subject to a convex set of constraints 
 $\mathcal{K}$ which defines the search space, and that we establish as follows.

We impose bounds on $\beta$ based on the empirical volatility ratio of the asset and factors, and bounds on $\epsilon$ using an envelope method based on the maximum observed residual from a robust central estimate. This prevents the solution from degenerating to OLS and ensures robustness to outliers.
The details are as follows.

To form the 
box constraining solutions for   beta coefficient $\beta_{i,j}$,  the lower and upper bounds are defined as
\begin{equation}\label{box1}
\begin{aligned}
L_{i,j} = \inf\{\frac{X_i(t)-X_i(t-1)}{F_j(t)-F_j(t-1)}|t=2,\ldots, T\}\\
U_{i,j} = \sup\{\frac{X_i(t)-X_i(t-1)}{F_j(t)-F_j(t-1)}|t=2, \ldots, T\}\\
\end{aligned}
\end{equation}

We use these quantities to obtain the constraints for the intercept as follows:

\begin{equation}\label{box2}
\begin{aligned}
L_{0,i} = \inf_{t,j}\{X_i(t)-U_{i,j}F_j(t)|j=1, \ldots,K;\,t=2,\ldots, T\}\\
U_{0,i} = \sup_{t,j}\{X_i(t)-L_{i,j}F_j(t)|j=1, \ldots,K;\,t=2,\ldots, T\}
\end{aligned}
\end{equation}

To specify the constraints for the $i$-th asset noise vector $\epsilon_i = (\epsilon_i(t): t=1, \ldots,T)$, 
 we employ a data-driven ``envelope'' method. 
 The goal is to ensure that the feasible region for the error term is wide enough to capture empirical deviations without allowing the solver to collapse to zero noise.

First, we define a ``central'' model by taking the midpoint of the feasible bounds for both the intercept and the beta coefficients:
\begin{equation}
\begin{aligned}
\hat{\beta}_{0,i}^{mid} &= \frac{1}{2} (L_{0,i} + U_{0,i}) \\
\hat{\beta}_{i,j}^{mid} &= \frac{1}{2} (L_{i,j} + U_{i,j}) \quad \text{for } j=1, \ldots, K
\end{aligned}
\end{equation}

Next, we calculate the residuals of the asset returns relative to this central model at each time $t$:
\begin{equation}
\hat{r}_i(t) = X_i(t) - \left( \hat{\beta}_{0,i}^{mid} + \sum_{j=1}^{K} \hat{\beta}_{i,j}^{mid} F_j(t) \right)
\end{equation}

To construct the bounds, we find the maximum absolute residual $r_i^{max} = \max_t |\hat{r_i}(t)|$.
To prevent the constraints from becoming excessively tight in low-volatility regimes, we also establish a minimum noise floor based on the empirical standard deviation of the asset returns, $\sigma(X_i)$.

The uniform scalar bound $E_i$ for the noise term is then defined as the maximum between the worst observed residual (padded with a 5\% buffer to ensure the central point remains strictly interior) and the noise floor:
\begin{equation}\label{box3}
E_i = \max \Big( 1.05 \cdot r_i^{max}  ; 0.5 \cdot \sigma(X_i) \Big)
\end{equation}

Finally, the constraints for the noise vector are applied symmetrically across all time periods:
\begin{equation}
-E_i \leq \epsilon_i(t) \leq E_i \quad \text{for } t=1, \ldots, T
\end{equation}

This guarantees that the optimization remains feasible even in the presence of extreme periodic shocks, 
%(such as flash crashes)
forcing the solver to weigh the entropic cost of structural parameters against the error term.

%[CORREGIR, we dont  do this. We do an envelope method.. see code]\\
%we simply compute the variance $\sigma_i$ of the $i$-th asset and take the set 
%$[-2\sigma_i,2\sigma_i]^T$. %as the constraint set.

The constraint set corresponding to the $i$-th asset is then
 $$\mathcal{K}_i= [L_{0,i},U_{0,i}]\times \prod_{j=1}^K [L_{i,j},U_{i,j}]\times[-E_i,E_i]^T$$  

\subsubsection*{The entropic solution}
Now, the analogue of \eqref{repsol1} has to be split into three different parts:

\begin{gather}
\beta_{0,i} = \frac{L_{0,i}e^{L_{0,i}(\bA^t\blambda^*)_1}+U_{0,i}e^{U_{0,i}(\bA^t\blambda^*)_1}}{e^{L_{0,i}(\bA^t\blambda^*)_1}+e^{U_{0,i}(\bA^t\blambda^*)_1}};\label{case1.1}\\
\beta_{i,j} = \frac{L_{i,j}e^{L_{i,j}(\bA^t\blambda^*)_{1+j}}+U_{i,j}e^{U_{i,j}(\bA^t\blambda^*)_{1+j}}}{e^{L_{i,j}(\bA^t\blambda^*)_{1+j}}+e^{U_{i,j}(\bA^t\blambda^*)_{j+1}}}, \;\;\; j=1, \ldots, K, 
\label{case1.2}\\
\epsilon_i(t) = \frac{-E_i e^{-E_i (\bA^t\blambda^*)_{1+K+t}}+E_i e^{E_i (\bA^t\blambda^*)_{1+K+t}}}{e^{-E_i(\bA^t\blambda^*)_{1+K+t}}+e^{E_i(\bA^t\blambda^*)_{1+K+t}}} , \;\;t=1, \ldots,T.\label{case1.3}
\end{gather}

The $\bA^t\blambda^*$ in the exponents are: $(\bA^t\blambda^*)_1=\sum_{t=1}^T\lambda^*_t,$ 
$(\bA^t\blambda^*)_{1+j}=\sum_{t=1}^T F_j(t)\lambda^*_t$ for $j=1, \ldots,K$,  and
 $(\bA^t\blambda^*)_{1+K+t}=\lambda^*_t$ for $t=1, \ldots,T.$ 

To finish, we spell out the function than has to be minimized to determine $\blambda^*.$ It is given in \eqref{inter1}. For $i=1, \ldots, N$ we have:

$$\Phi(\bA^t\blambda)-\langle\by,\blambda\rangle=\Phi(\bA^t\blambda)-\sum_{t=1}^TX_i(t)\lambda_t.$$
That is to each asset indexed by $i$ we have a different $\blambda^*$ that minimizes the last expression, we obtain a corresponding set of regression coefficients.

%%%%%%%%%%%%%%%%%%%%%NEW (August/2026)
%%%to address AE (Associate Editor of IJFA) Justify First-Stage Residual Optimization (AE Comment 2): Clarify the economic and statistical rationale for concatenating asset residuals $\boldsymbol{\epsilon}_i$ into the unknown vector $\boldsymbol{\xi} = [\beta_{0,i}, \boldsymbol{\beta}_i, \boldsymbol{\epsilon}_i]^t$.

\subsubsection*{Economic and Statistical Motivation of First-Stage Reformulation}

The reformulation of the factor model in \eqref{prob1.2}-\eqref{eqfin}
as an underdetermined linear 
system $A\boldsymbol{\xi} = \mathbf{X}_i$, where the residual vector 
$\boldsymbol{\epsilon}_i \in [-E_i, E_i]^T$ is explicitly optimized alongside the structural coefficients $(\beta_{0,i}, \boldsymbol{\beta}_i)$, 
provides distinct statistical and economic advantages over standard least-squares estimators:

\begin{enumerate}
    \item \textbf{Statistical Regularization and Barrier Behavior:} In standard OLS factor estimation, residuals 
    are treated as unconstrained slack variables whose squared sum $\sum_{t=1}^T \epsilon_i(t)^2$ is minimized. 
    Consequently, a large single-period outlier disproportionately distorts the structural loadings 
    $\boldsymbol{\beta}_i$. By incorporating $\boldsymbol{\epsilon}_i$ into the vector of unknowns $\boldsymbol{\xi}$ within a strict empirical envelope $[-E_i, E_i]$, the Fermi-Dirac entropy function $\Psi(\boldsymbol{\xi})$ acts as a natural logarithmic barrier. As any component approaches its boundary $E_i$, the marginal entropic cost diverges to infinity:
    \begin{equation}
        \frac{\partial \Psi}{\partial \epsilon_i(t)} = \frac{1}{2E_i} \ln\left( \frac{E_i + \epsilon_i(t)}{E_i - \epsilon_i(t)} \right)
    \end{equation}
    This prevents extreme parameter adjustments in response to heavy-tailed noise, forcing the solver to distribute 
    uncertainty across both the error terms and parameter bounds rather than overfitting $\boldsymbol{\beta}_i$.

    \item \textbf{Economic Intuition under Heteroskedasticity:} Financial asset returns exhibit severe 
    non-Gaussianity, volatility clustering, and localized microstructure noise. Standard linear regressions 
    implicitly assume that residuals reflect additive Gaussian noise around a stationary linear signal. In contrast, 
    our entropic formulation models asset returns as noisy realizations bounded within an empirical confidence 
    domain. The parameter $E_i$ in \eqref{box3} establishes a maximum tolerance threshold for idiosyncratic 
    deviations. When a market shock occurs, the entropic framework absorbs the anomaly within the noise envelope up to $E_i$ without forcing the factor sensitivities $\boldsymbol{\beta}_i$ to shift violently.

    \item \textbf{Relationship to Standard Estimators:} Unlike $L_1$ (Lasso) or $L_2$ (Ridge) regularized factor 
    models, which penalize parameter magnitudes relative to a zero baseline, the entropic estimator pulls loadings 
    toward the midpoint of the empirical domain $\frac{1}{2}(L_{i,j} + U_{i,j})$ while respecting strict physical boundaries. If the noise envelope $E_i \to \infty$ and parameter bounds expand, the entropic objective asymptotically approaches unconstrained optimization. However, under finite data-driven bounds, it yields a robust, non-parametric factor loading matrix $\mathbf{B}$ resistant to sample instability.
\end{enumerate}

%%%%%%%%%%%%%%%%%%%%%%%%%%%%%%%%%%%%end NEW

\subsection{Stage 2: The replicating portfolio}

Suppose now, that the factors are liquid assets and we know $\bbet_0,$ the matrix $\bB$ 
and $\bepsilon.$ 
Suppose that we construct a  portfolio made up of the factors by assigning a weight 
$p_j$ to $F_j,$ that is we form $\sum_{i=1}^Kp_iF_i.$ 
We have only  access to portfolios made up of the $X_n,$ and want to  form a portfolio with weights $w_n$ that matches the returns of $\langle\bp,\bF\rangle$.

Mathematically, the problem becomes: Find a $\bw\in\cK_p \subset \mathbb{R}^N$ such that
\begin{equation}\label{prob1}
E[\big(\langle\bw,\bX\rangle -\langle\bp,\bF\rangle\big)^2]
\end{equation}
achieves its minimum possible value. 
Besides the usual constraint $\sum_{i=n}^N w_n =1$ we want to impose constraints of the box  like $\cK_p = \{\bw\in\bbR^{N}:l_n< w_n< u_n\}$.
The choice $-\infty<l_n<u_n<+\infty$ as any finite values in $\bbR$  capture the possibility of having controlled short or long positions in the assets. 

Notice that once that we have a representation as in \eqref{prob1.2}, the portfolio replication consists of minimizing the tracking error:
$$E[\big(\langle\bw,\bX\rangle -\langle\bp,\bF\rangle\big)^2]=E[\big(\langle\bw,\bB\bF\rangle+\langle\bw,(\bbet_0+\bepsilon)\rangle -\langle\bp,\bF\rangle\big)^2]$$
 At this point it becomes clear that if we are able to solve: 

\begin{gather}
\mbox{Find}\;\;\;\bw^*\in\cK_p\;\;\;\label{prob3.1}\\
\mbox{Such that}\;\;\;\bB^t\bw^*=\bp.\;\;\;\label{prob3.2}
\end{gather}

Then, our portfolio $\langle\bw^*,\bX\rangle$ will track $\langle\bp,\bF\rangle$ exactly and $E[\langle\bw^*,\bepsilon\rangle^2]$ will be the tracking error.

Now, the setup  $\bA\bxi = \by$ for the entropic solution is:
\begin{itemize}
    \item \textbf{The unknowns ($\bxi$):} The vector $\bxi$ corresponds directly to the portfolio weights 
    $\bw \in \mathbb{R}^N$.
    \item \textbf{The design matrix ($\bA$):} We stack the transposed factor loading matrix $\bB^t$ with a row of ones $\mathbf{1}$ to enforce the budget constraint:
    \begin{equation}
        \bA = \begin{bmatrix}
        \bB^t \\
        \mathbf{1}^t
        \end{bmatrix} \in \mathbb{R}^{(K+1) \times N}
    \end{equation}
    \item \textbf{The target ($\by$):} The target vector contains the desired factor exposure $\bp$ (e.g., $1.0$ for single market beta, or a vector describing the portfolio of factors) and the budget sum:
    \begin{equation}
        \by = \begin{bmatrix}
        \bp \\
        1
        \end{bmatrix} \in \mathbb{R}^{K+1}
    \end{equation}
\end{itemize}

The convex set $\mathcal{K}$ is defined as $[0, 1]^N$, enforcing a long-only constraint with no leverage.

After having determined the replication portfolio $\bw^*$, its expected return is given by 
$$\langle\bw^*, E[X]\rangle = \langle\bw^*,\bbet_0\rangle + \langle\bp,E[F]\rangle$$ 
and its variance is $\bw^*E[\bepsilon, \bepsilon^t]  (\bw^*)^t$.

\subsubsection*{The entropic solution}
In this case  \eqref{repsol1} becomes:

\begin{equation}\label{repsol3}
\bw_i^* = \frac{a_ie^{a_i (\bA^t\blambda^*)_i} + b_i e^{b_i(\bA^t\blambda^*)_i}}{e^{a_i(\bA^t\blambda^*)_i}+e^{b_i(\bA^t\blambda)_i}},\;\;\;\;\;\;i = 1, \ldots, N.
\end{equation}
where $(\bA^t\blambda^*)_i=\sum_{j=1}^K\beta_{i,j}\lambda_j^*+\lambda_{K+1}^*.$ 
Similarly, the optimal Lagrange multiplier is determined numerically minimizing the function of $\blambda\in\bbR^{K+1}$:
$$\sum_{i=1}^N\bigg(e^{a_i(\bA^t\blambda)_i}+e^{b_i(\bA^t\blambda)_i}\bigg)-\sum_{j=1}^K\lambda_jp_j-\lambda_{K+1}.$$

%%%%%%%%%%%%%%%%%%%
%A practical note 
%Soft Margin Relaxation and Feasible Region Expansion

\begin{remark}\label{softmargin}
In empirical applications, particularly during periods of severe market stress, enforcing strictly 
bounded non-negativity constraints (e.g., $w_i \in [0,1]$) can induce boundary instability. 
When asset factor loadings ($\beta$) fluctuate violently, strict zero-bounds force the optimizer 
into erratic ``corner solutions" to satisfy the structural constraints, artificially inflating portfolio turnover. To mitigate this, it is recommended to introduce a soft margin relaxation to the box constraints, defining the boundaries as $w_i \in [-0.05, 0.999]$. 
%In quantitative portfolio management, this is analogous to a constrained "active extension" (e.g., a micro 105/5 portfolio). 
This slight expansion of the feasible region prevents the optimizer from thrashing against the zero-bound, allowing the Fermi-Dirac entropic penalty to smoothly distribute capital and achieve a true global minimum of structural diversification.
\end{remark}
%%Ref to justify this relaxation? Wright, S. J. (1997). Primal-Dual Interior-Point Methods. SIAM.

\section{Numerical Examples}\label{sec4}

To evaluate the performance and robustness of the Entropic Factor Model (EFM), we conducted five comparative experiments against the standard OLS approach. These experiments range from a baseline equity replication problem, followed by  a complex multi-asset scenario,  a stress test for robustness, and two real-world scenarios considering trading costs and COVID-19 market crash.
%%%%%%%%new 08/2026
In all experiments we report six metrics: Gross Tracking Error (Ann. \%), Mean Tracking Bias, Annualized Turnover (\%), Annualized Net Return (\%), Annualized Volatility, and Maximum Drawdown (\%).
%%Explicar c/u?

\subsection{Experiment 1: Baseline equity replication}
%% code DSAF/Rlabs/FactorModelEntropy/EntropicPortfolioTracker.R
%%
We first examined the performance of the EFM on a ``Survivor Portfolio'' of 20 US large-cap stocks replicating the S\&P 500 from 2015 to 2024. 
The goal was to validate the method's feasibility in a standard, liquid market regime.

{\bf Data.} The assets for the replication portfolio were selected from diverse sectors: Energy (XOM, CVX); Tech (MSFT, AAPL, IBM, INTC, CSCO, AMZN); Finance (JPM, BAC, GS, AXP); Healthcare (JNJ, PFE, MRK); and Consumer (PG, KO, WMT, MCD, BA). 
The benchmark  is the S\&P 500 Index (SPY).

\textbf{Training Period:} 2015/01/01--2021/04/20 (used to estimate $B$ and calculate weights $w$).

 \textbf{Test Period:} 2021/04/21 -- 2023/12/29 (Out-of-Sample evaluation).
 
 \textbf{Constraints:} Both models were constrained to be fully invested ($\sum w = 1$) and long-only ($w \ge 0$).

Table \ref{tab:results1} summarizes the out-of-sample tracking performance.

\begin{table}[htbp]
    %\centering
    \begin{tabular}{lcc}
        \toprule
        \textbf{Metric} & \textbf{OLS Tracking} & \textbf{EFM Tracking} \\
        \midrule
       % Reconstruction Error & $0.00$ & $\approx 1.9 \times 10^{-7}$ \\
        Gross Tracking Error (Ann. \%) & \textbf{6.79} & 7.22 \\
        Mean Tracking Bias (Ann. \%) & 1.36 & \textbf{1.08} \\
        Annualized Turnover (\%)   &    117.94   &    115.37 \\
        Annualized Net Return (\%)     &    7.25    &     7.06 \\
      Annualized Volatility (\%)     &   15.13     &   14.50 \\
      Maximum Drawdown (\%)    &    19.24    &    17.53\\
       % Correlation with Index & 0.93 & 0.92 \\
        \bottomrule
    \end{tabular}
    \caption{Experiment 1: Out-of-Sample Tracking Performance (2021/04/21 -- 2023/12/29)}
    \label{tab:results1}
  \end{table}

 The Entropic method achieved a tracking error  of 7.22\%, comparable to the 6.79\% achieved by OLS.  
However, a key distinction emerged in the directional stability: the EFM portfolio exhibited a much lower Mean Tracking Bias (1.08\%) compared to OLS (1.36\%).
This suggests that while OLS minimizes variance by overfitting to high-beta assets that performed well in the training set, the Entropic prior regularizes the weights, maintaining a tighter structural link to the index performance. 
The reconstruction error for EFM was negligible ($\approx 2\times 10^{-7}$), confirming that the entropic solver successfully handles strict budget constraints.
Furthermore, EFM delivers lower Annualized Volatility (14.50\% vs. 15.13\%) 
and lower Maximum Drawdown (17.53\% vs. 19.24\%) than OLS.
Figure \ref{fig:ex1} in Appendix A shows the out-of-sample cummulative return plot of OLS vs EFM vs SPY for this portfolio of 20 representative assets.

An analysis of the factor loadings (betas) produced by each method reveals that each perceives the risk of the underlying assets differently and, as a consequence, produces different portfolio weights.   Appendix A details the estimated Betas for all 20 assets.
The EFM estimation is sensitive to the full distribution of residuals, not just the mean.
Observe that  OLS assigns JPM and BA  betas of 1.228 and   1.488, whilst EFM assigns them 1.699 and 2.388, respectively.   EFM identifies these assets as ``hyper-sensitive'' to the market factor.   Consequently, the Entropic tracker allocates \textit{less} capital to these stocks to achieve the target portfolio beta of 1.0.

\subsection{Experiment 2: Multi-asset class replication}
%% code DSAF/Rlabs/FactorModelEntropy/EntropicPortfolioTrackerMulti.R
 To test the model's ability to span diverse asset classes, we tasked it with replicating a ``Hybrid Portfolio'' benchmark comprising 80\% S\&P 500 (SPY) and 20\% Bitcoin (BTC). 
 Thus, in the numerical solution (Stage 2) one sets the factor exposure as $\bp=(0.8,0.2)$.
 
  The investment universe included the  20 equities in the previous experiment plus two cryptocurrency proxies: Ethereum (ETH-USD) and Litecoin (LTC-USD).
  The period considered spans 2018/01/03--2023/12/29, encompassing multiple crypto market cycles.
  We used the time interval 2018/01/03–2022/03/11 as the in-sample training window to estimate factor loadings $\mathbf{B}$, leaving 2022/03/14–2023/12/29 for out-of-sample tracking evaluation.
%%%
%\subsubsection*{Results and Discussion}
Table \ref{tab:results2} summarizes the out-of-sample tracking performance.

\begin{table}[htbp]
    %\centering
    \begin{tabular}{lcc}
        \toprule
        \textbf{Metric} & \textbf{OLS Tracking} & \textbf{EFM Tracking} \\
        \midrule
       % Reconstruction Error & $0.00$ & $\approx 3.656831e-07$ \\
        Gross Tracking Error (Ann. \%) & 8.68   &   8.65 \\
        Mean Tracking Bias (Ann. \%) & -1.56    &  -1.55 \\
        Annualized Turnover (\%)   &    173.12   &  172.81 \\
        Annualized Net Return (\%)     &    8.64   &    8.66 \\
      Annualized Volatility (\%)     &   21.70    &  21.66 \\
      Maximum Drawdown (\%)    &    27.49    &  27.45\\
        \bottomrule
    \end{tabular}
    \caption{Experiment 2: Out-of-Sample Tracking Performance (2022/03/14--2023/12/29)}
    \label{tab:results2}
  \end{table} 
  
 Both the OLS and EFM models successfully solved the inverse problem, confirming that the chosen universe satisfies the spanning condition for the target factors.
 EFM matches OLS across all dimensions while achieving marginally lower Gross Tracking Error (8.65\% vs. 8.68\%), lower Volatility (21.66\% vs. 21.70\%), 
 and higher Net Return (8.66\% vs. 8.64\%). 
 %Reconstruction Error of $0$ for OLS,  $3.6 \times 10^{-7}$ for EFM.
Furthermore, the Entropic model successfully identified the correct structural proxies.  For Ethereum, the Stage 1 entropic regression estimated a Bitcoin Beta of 0.999, for Litecoin, the entropic estimation of Bitcoin beta is 1.05, in both cases a near-perfect unit correlation.  
Consequently, the solver efficiently allocated capital to meet the 20\% crypto exposure target without manual intervention.

\subsection{Experiment 3: Robustness to idiosyncratic shocks}
%% code DSAF/Rlabs/FactorModelEntropy/EntropicPortfolioTrackerFlashCrash
Standard least-squares methods are known to be sensitive to outliers, as the squaring of residuals 
($\epsilon^2$) disproportionately penalizes large deviations. To test the Entropic model's behavior under stress, we simulated a ``Flash Crash'' scenario.

We utilized a year window (2019/01/01 -- 2020/01/01) of clean market data for the 20 equities used in previous experiments. On Day 20, we injected a \textbf{-25\% return shock} into Amazon (AMZN), simulating a massive idiosyncratic failure where the market factor (S\&P 500) remained flat.
We report tracking performance pass the flash crash; thus, 
we split the data first 60 days for training (contains Flash Crash), remaining days for out-of-sample test (approx. 191 days).

%%%%Rationale and how this experiment led to improve in the EFM method : the key is using prior knowledge, as the possibility of flash crash and program this possible event in the estimation of bounds
%%Standard OLS cannot distinguish between ``market-driven'' moves and ``idiosyncratic'' moves; it treats every data point as an equally valid constraint. In contrast, the Entropic Factor Model allows for \textit{Stage 1 Filtering}. We refined the bound-generation process to calculate Beta ratios ($\Delta y / \Delta f$) only on days where the market factor exhibited a significant signal ($|\Delta f| > 0.2\%$). This logic is intrinsic to the Entropic approach, which requires defining a ``feasible region'' of probabilities rather than blindly fitting a curve to all points.

%\subsubsection*{Results and Discussion}
Table \ref{tab:flash_crash} compares the Beta estimation and the resulting portfolio weights for the crashed asset. The True benchmark corresponds to the Beta estimation for the original (non-corrupted) AMZN data.

\begin{table}[htbp]
    %\centering
    \begin{tabular}{lccc}
        \toprule
        \textbf{Method} & \textbf{Estimated Beta} & \textbf{Portfolio Weight} & \textbf{Behavior} \\
        \midrule
        \textbf{True benchmark} & 1.749 & (Clean) & -- \\
        \textbf{OLS}  & 1.219 & 5.73\% & Average the Error \\
        \textbf{EFM} & \textbf{-1.261} & \textbf{0.27\%} & \textbf{Circuit Breaker} \\
        \bottomrule
    \end{tabular}
    \caption{Experiment 3: Model Response to Data Corruption (AMZN)}
    \label{tab:flash_crash}
\end{table}

Table \ref{tab:results4} summarizes out-of-sample tracking performance after flash crash.

\begin{table}[htbp]
    %\centering
    \begin{tabular}{lcc}
        \toprule
        \textbf{Metric} & \textbf{OLS Tracking} & \textbf{EFM Tracking} \\
        \midrule
       % Reconstruction Error & $0.00$ & $\approx 3.656831e-07$ \\
        Gross Tracking Error (Ann. \%) & 3.55      &     3.78 \\
        Mean Tracking Bias (Ann. \%) & -2.65      &   -1.37 \\
        Annualized Turnover (\%)   &    91.42     &     91.86 \\
        Annualized Net Return (\%)     &    15.54      &    17.01 \\
      Annualized Volatility (\%)     &    12.68    &    12.77 \\
      Maximum Drawdown (\%)    &    7.92    &    7.73\\
        \bottomrule
    \end{tabular}
    \caption{Experiment 3: Out-of-Sample Tracking Performance (2019/04/01--2019/12/31)}
    \label{tab:results4}
  \end{table} 
  %%%Comentar estos resultados
  
The two models exhibited fundamentally different responses to the structural break.
%%
%\textbf{OLS (Error Averaging):} 
The OLS estimator attempted to fit the outlier by lowering the Beta coefficient (1.219). While mathematically consistent with minimizing squared error, the OLS implicitly assumes that the crash is a valid data point representing the asset's structural correlation. Consequently, it maintained a relatively high capital allocation (5.73\%) to the distressed asset.

%\textbf{EFM (Uncertainty Penalization):} 
The Entropic model reacted to the massive volatility injection by expanding the feasible solution bounds (high entropy). Confronted with contradictory data  (a massive move in the asset with no corresponding move in the factor) the entropic solver reverted toward the uninformative prior (negative Beta of $-1.26$).
Crucially, this resulted in a \textbf{defensive capital allocation}. 
The EFM algorithm effectively treated the asset as ``structurally broken'' or highly uncertain, assigning it a weight of only 0.27\%, roughly 1/20th of the OLS allocation. While OLS continues to bet on the mean-reversion of the asset, EFM acts as a probabilistic ``circuit breaker'', reducing exposure when the signal-to-noise ratio degrades.
Overall, the defensive ``circuit breaker" response shown in  Table \ref{tab:flash_crash} (cutting AMZN weight from 5.73\% to 0.27\%) 
translates directly to superior out-of-sample performance in Table \ref{tab:results4} : EFM generates +1.47\% higher Annualized Net Return
(17.01\% vs. 15.54\%), cuts Mean Tracking Bias in half (-1.37\% vs. -2.65\%), and achieves lower Maximum Drawdown (7.73\% vs. 7.92\%).

%%%%%%%%%%%%%%%%%%%%%%%%%%%%%%%%%%%
%%Old results with manual bounds
%The presence of the outlier caused the standard OLS Beta estimate for AMZN to collapse from a true value of 1.75 to 1.23, as the regression line tilted to minimize the squared error of the crash. 
%In contrast, the Entropic Factor Model, employing a signal-filtering logic in Stage 1, 
%produce a negative beta of -0.371.

%maintained a Beta estimate of 1.41.
%OLS Weight in AMZN: 5.4 \%
%EFM Weight in AMZN: 3.99 \%
%Out-of-Sample Tracking Error (Next 10 Days) 
%OLS TE: 0.0289 
%EFM TE: 0.0309 
%%
%By filtering out days where the market signal was negligible, the Entropic approach treated the crash as a large error term rather than a structural signal. This prevented the portfolio solver from incorrectly reducing the weight in Amazon, demonstrating the method's superior utility in regimes characterized by heavy-tailed idiosyncratic risks.

\subsection{Experiment 4: Partial replication of an equity benchmark under trading frictions}
%% code DSAF/Rlabs/FactorModelEntropy/EntropicPortfolio_Turnover_NetFees.R

To evaluate the baseline efficiency of the replicating models under realistic trading frictions, we conducted a rolling-window out-of-sample backtest during a standard, benign market regime. Using a 252-day training window and a 21-day (monthly) rebalancing frequency from 2018-01-01 to 2023-12-31, we tracked the tradable SPY ETF using our constrained universe of 20 constituent assets. As in all previous experiments, we utilized the soft-margin boundaries ($w_i \in [-0.05, 0.999]$) for the Entropic Factor Model (EFM) to prevent boundary thrashing. A proportional transaction cost of 10 basis points (0.10\%) was applied to the portfolio turnover at each rebalancing step to compute the net-of-fees performance.
Note that because the first 252 trading days (2018) are reserved for initial model training, the actual out-of-sample backtest evaluated  runs over the 5-year period 2019–2023.

\begin{table}[htbp]
    %\centering
    \begin{tabular}{lcc}
        \toprule
        \textbf{Metric} & \textbf{OLS Replication} & \textbf{EFM Replication} \\
        \midrule
        Gross Tracking Error (Ann. \%) & 4.29 & 5.12 \\
        Mean Tracking Bias (Ann. \%)   & 3.07 &  2.87 \\
        Annualized Turnover (\%) & 139.51 & 135.24 \\
        Annualized Net Return (\%)  & 18.65   & 18.29\\
        Annualized Volatility (\%)   & 21.65  & 22.16\\
         Maximum Drawdown (\%) & 30.71  &  30.62\\
        \bottomrule
    \end{tabular}
    \caption{Experiment 4. Out-of-Sample Performance and Turnover: Standard Regime (2019-2023)}
    \label{tab:expA_turnover}
\end{table}

%\subsubsection*{Discussion of Baseline Empirical Trade-offs}

The results in Table \ref{tab:expA_turnover} highlight the fundamental mechanics of entropic regularization during normal market conditions. Standard Ordinary Least Squares (OLS) is a mathematically unconstrained variance-minimizer; by definition, it locates the exact local weights that minimize the $L_2$ error norm, yielding a highly precise Gross Tracking Error of 4.29\%. However, relying purely on recent covariance matrices causes the unconstrained OLS model to continuously micro-adjust its allocations to transient market noise, resulting in an annualized turnover of 139.51\%.

In contrast, the EFM utilizes a constraint-driven entropic solver. By minimizing the Fermi-Dirac entropy subject to robust bounds, the algorithm inherently favors structural parsimony and a broadly distributed allocation. 
%The introduction of the soft-margin relaxation ($[-0.05, 0.99]$) allows the entropic solver to absorb minor fluctuations in empirical factor loadings without thrashing against a strict zero-bound. 
Consequently, the EFM achieves superior weight stability,  reducing the required annualized turnover to 135.24\%. 

This reduction in trading friction comes at the cost of a minor reduction in tracking precision, with the EFM sacrificing roughly 83 basis points of tracking error (5.12\% vs 4.29\%) compared to OLS. This specific trade-off (exchanging a fraction of variance-minimizing precision for enhanced structural stability) 
demonstrates that the Entropic framework is not only a defensive mechanism during crises, but a highly efficient, friction-reducing replication engine during standard market regimes.

%
%%1 basis point=0.01% and we got: EFM Tracking Error: 5.52% and OLS Tracking Error: 4.29%
% hence, 5.12% - 4.29% = 83% which is 83% x100 = 83 bps 

%%The tracking error is the standard deviation of the return differences. 
% A transaction cost of 10 bps on ~150% turnover is about 0.15% per year. 

\subsection{Experiment 5: Real-world stress test  during the COVID-19 market crash}
\label{sec:exp_covid}
%% code DSAF/Rlabs/FactorModelEntropy/EntropiPortfolio_COVIDcrash.R

While standard continuous-market simulations provide baseline validation, a robust portfolio 
replication engine must fundamentally withstand severe structural breaks. 
To evaluate the out-of-sample defensive capabilities of the Entropic Factor Model (EFM), 
we benchmarked our methodology during the COVID-19 market crash of 2020. 
As highlighted by recent literature~\cite{ZS24}, % (Zhang \& De Smedt, 2024), 
this period represents a definitive real-world stress test characterized by extreme volatility, sudden regime shifts, and asset correlations converging toward unity. 

%\subsubsection*{Experimental Setup}
We designed a high-frequency rolling-window backtest spanning from June 2019 to December 2020, capturing the pre-crash environment, the violent March 2020 sell-off, and the subsequent recovery. 
The target benchmark was the tradable SPY ETF, replicated using our constrained universe of 20 large-cap constituent equities. To capture the rapid market dynamics, the algorithms utilized a 126-day (6-month) training window and rebalanced every 10 trading days (2 weeks). 
A proportional transaction cost of 10 basis points (0.10\%) was applied to portfolio turnover to compute net-of-fees performance.

%\subsubsection{Empirical Results and Discussion}

Table \ref{tab:covid_results} presents the comparative out-of-sample risk and return metrics for both the standard Ordinary Least Squares (OLS) and the soft-margin EFM frameworks over the crisis period.
Note, once again, that with a 126-day initial training window starting in June 2019, the out-of-sample evaluation period covers December 2019 through December 2020.

\begin{table}[htbp]
    %\centering
    \begin{tabular}{lccc}
        \toprule
        \textbf{Metric} & \textbf{OLS} & \textbf{EFM} & \textbf{Benchmark (SPY)} \\
        \midrule
        Gross Tracking Error (Ann. \%) & 5.76 & 6.11 & 0.00 \\
        Annualized Turnover (\%) & 418.51 & 266.93 & 0.00 \\
        Max Drawdown (\%) & 31.27 & 30.04 & 33.72 \\
        Annualized Return (Net \%) & 25.64 & 27.85 & 18.72 \\
        Annualized Volatility (Net \%) & 33.80 & 33.44 & 32.38 \\
        \bottomrule
    \end{tabular}
    \caption{Experiment 5. Out-of-Sample Net-of-Fees Performance: COVID-19 Crash (Dec. 2019 - Dec. 2020)}
    \label{tab:covid_results}
\end{table}

The empirical results  illustrate the defensive superiority of the Entropic Factor Model over standard least-squares replication. During the height of the crisis in March 2020, the extreme volatility severely destabilized the empirical covariance matrices upon which classical models rely. 
Consequently, the unconstrained OLS model aggressively churned the portfolio in an attempt to fit the noisy, transient data, resulting in a severe annualized turnover of 418.51\%. 

In contrast, the EFM explicitly penalized this epistemic uncertainty. 
The entropic solver achieved a massive reduction in rebalancing frictions, yielding an annualized turnover of only 266.93\%. 
This structural stability translated directly into superior risk-adjusted net performance. 
While the EFM conceded a negligible 35 basis points in Gross Tracking Error precision compared to OLS (6.11\% vs. 5.76\%), the reduction in trading frictions and 
the avoidance of over-leveraged, highly concentrated positions allowed the EFM to  
 outperform OLS in Net Annualized Return (27.85\% vs. 25.64\%). 

Furthermore, the entropic replication engine delivered this excess return while simultaneously 
exhibiting strictly lower risk metrics, as given by the Annualized Volatility (33.44\% vs. 33.80\%) and  downside protection via a lower Maximum Drawdown (30.04\% vs. 31.27\%). 
These results definitively confirm that while classical variance-minimizing models are efficient in benign continuous markets, the Entropic Factor Model operates as a vital probabilistic circuit breaker. By enforcing structural parsimony and penalizing out-of-sample uncertainty, it provides a highly robust, economically viable replication strategy during severe structural breaks.

%{\color{red}
%\subsection{Otros experimentos?}
%- replicar un portfolio de ETFs? o uno bien diversificado (de stocks?) 
%
%- completar la parte de SELECCIONAR N assets 'adecuados' para  hacer  index tracking de un indice (sugerir un metodo adicional para la seleccion y luego sigue stage 1 y 2)
%}

%%%%%%%%%NEW August 2026
%%%%%%% to address AE & Reviewer 1: Quantify the Turnover vs. Tracking Error Trade-off (Reviewer 1, AE 1 & 5): Explicitly demonstrate why lower portfolio turnover outweighs a minor increase in tracking error.

\subsection{Economic Evaluation of the Turnover vs. Tracking Error Trade-Off}

To formally quantify when the operational benefit of lower portfolio turnover 
outweighs the cost of a slightly higher gross tracking error, we build on the 
convex transaction-cost framework of \cite{lobo} %Lobo et al. (2007) 
and the active net-utility formulation of \cite{grinold}[Ch. 14-16] %Grinold and Kahn (2000) 
to define an institutional Net Quadratic Loss 
function $\mathcal{L}_{\text{net}}(w; c)$ (see also \cite{roll92}):

\begin{equation}
\mathcal{L}_{\text{net}}(w; c) = \text{TE}^2(w) + 2 \cdot c \cdot \text{Turnover}(w)
\end{equation}
where $\text{TE}(w)$ represents the annualized gross tracking error, $\text{Turnover}(w)$ is the annualized portfolio turnover, and $c > 0$ denotes the total proportional transaction friction (combining explicit exchange fees, bid-ask spreads, and price slippage).

The break-even transaction cost $c^*$, at which the Entropic Factor Model (EFM) and Ordinary Least Squares (OLS) deliver identical net economic performance, is obtained by setting $\mathcal{L}_{\text{net}}(w_{\text{EFM}}; c^*) = \mathcal{L}_{\text{net}}(w_{\text{OLS}}; c^*)$:

\begin{equation}
c^* = \frac{\text{TE}_{\text{EFM}}^2 - \text{TE}_{\text{OLS}}^2}{2 \cdot \left( \text{Turnover}_{\text{OLS}} - \text{Turnover}_{\text{EFM}} \right)}
\end{equation}

Applying this formulation to our empirical findings reveals the clear regime-dependent trade-offs:

\begin{enumerate}
    \item \textbf{Crisis Regime (COVID-19 Stress Test, Exp. 5):} 
    Plugging in the empirical results from Table \ref{tab:covid_results} ($\text{TE}_{\text{OLS}} = 5.76\%$, $\text{TE}_{\text{EFM}} = 6.11\%$, $\text{Turnover}_{\text{OLS}} = 418.51\%$, and $\text{Turnover}_{\text{EFM}} = 266.93\%$):
    \begin{equation}
    c^* = \frac{0.0611^2 - 0.0576^2}{2 \cdot (4.1851 - 2.6693)} \approx \frac{0.000415}{3.0316} \approx 0.000137 \quad (1.37 \text{ bps})
    \end{equation}
    During market dislocations, if total execution costs (fees plus market impact) exceed a mere \textbf{1.37 basis points}, EFM strictly dominates OLS. Because real-world execution frictions during crisis periods routinely exceed 20--50 bps due to illiquidity and widening spreads, EFM's turnover reduction of $151.58\%$ generates a net annualized return advantage of $+2.21\%$ ($27.85\%$ vs. $25.64\%$).

    \item \textbf{Standard Market Regime (Exp. 4):}
    In benign continuous markets (Table \ref{tab:expA_turnover}), where baseline asset turnover is lower ($\Delta \text{Turnover} = 4.27\%$), the break-even friction threshold rises to $c^* \approx 9.25 \text{ bps}$. In frictionless or low-impact institutional environments, OLS retains a slight edge in gross variance minimization. However, as portfolio size increases or underlying liquidity decreases, EFM provides superior net-of-cost tracking stability.
\end{enumerate}

%%%%%%%%%%%%%%%%%%%%%%%%%%%%%%end NEW

\section{Conclusions}
\label{sec5}

In this study, we have introduced a unified, information-theoretic framework for robust portfolio replication. 
The methodology addresses the ill-posed nature of the inverse replication 
problem by utilizing entropy minimization as a natural regularization mechanism, both in the estimation of asset factor loadings and in the final portfolio weight allocation.
Our findings, validated across five distinct numerical experiments, demonstrate that the Entropic Factor Model (EFM) offers a significant  alternative to classical Ordinary Least Squares (OLS) replication.

The empirical evidence demonstrates that the EFM provides a superior balance between tracking precision and operational efficiency. In standard, benign market regimes, the model's ability to absorb minor factor fluctuations leads to a measurable reduction in annualized turnover
 compared to OLS, effectively lowering the cost of maintaining the replication. This structural parsimony becomes even more critical during severe market dislocations. 
 During the real-world stress test of the COVID-19 crash, the EFM acted as a probabilistic circuit breaker, producing significantly higher net-of-fees annualized returns and superior downside protection, as evidenced by lower maximum drawdowns.

Furthermore, the model's resilience to idiosyncratic shocks, such as simulated flash crashes or data corruption, shows its viability as a defensive investment tool. 
By penalizing assets that exhibit abnormal variance or structural shifts, the EFM prevents the 
concentration of risk that often plagues unconstrained least-squares models. These findings align with the core principle of Maximum Entropy: in the face of uncertainty, the most rational distribution is the one that remains as diversified as possible while satisfying the known physical constraints of the system.

Ultimately, the Entropic Factor Model offers a practical solution for institutional investors and index trackers who must operate in friction-heavy environments. 
While classical models may offer marginal gains in gross tracking error in frictionless simulations,
 the EFM's superior stability and reduced rebalancing requirements translate into higher net efficiency in actual market conditions. By enforcing structural integrity over naive error 
 minimization, the entropic framework provides a reliable methodology for portfolio replication that is both mathematically sound and economically robust across diverse market regimes.

\section{Statements and Declarations}
%\subsection*{Author contributions}
%All authors contributed to the study conception and design. Material preparation, data collection and analysis were performed by [full name], [full name] and [full name]. The first draft of the manuscript was written by [full name] and all authors commented on previous versions of the manuscript. All authors read and approved the final manuscript.

%Conceptualization: [full name], âŠ; Methodology: [full name], âŠ; Formal analysis and investigation: [full name], âŠ; Writing - original draft preparation: [full name, âŠ]; Writing - review and editing: [full name], âŠ; 
\subsection*{Competing Interests}
The authors report that there are no competing interests to declare. 

\subsection*{Funding}
%No funding was received for conducting this study. 
A. Arratia  was supported by the Spanish Ministerio de Ciencia, Innovaci\'on y Universidades - Agencia Estatal de Investigaci\'on (MCIN/AEI/10.13039/501100011033) and by the European Union (ESF+), under grant agreement No. PID2025-173091NB-I00.

\subsection*{Data Availability}
The data used in this study is publicly available.  
% The data and Python codes to reproduce the results in this work are available in GitHub: 
% \url{https://github.com/argimiroa/EntropicClassifier.git}

%\newpage
\appendix
\section{Appendix: Comparison of Estimated Betas}

Table \ref{tab:betas} presents the factor loadings ($B$) estimated over the training period (2015-2020) for the 20 Survivor assets.

\begin{table}[htbp]
   % \centering
    \small
    \begin{tabular}{lrr|c}
        \toprule
        \textbf{Asset} & \textbf{OLS Beta} & \textbf{EFM Beta} & \textbf{Difference (EFM - OLS)} \\
        \midrule
        XOM  & 1.027 & 1.032 & +0.005 \\
        CVX  & 1.195 & 1.192 & -0.003 \\
        MSFT & 1.212 & 1.218 & +0.006 \\
        AAPL & 1.199 & 1.201 & +0.002 \\
        IBM  & 0.979 & 0.978 & -0.001 \\
        INTC & 1.213 & 1.215 & +0.002 \\
        CSCO & 1.073 & 1.064 & -0.008 \\
        AMZN & {0.979} & {0.979} & {+0.00} \\
        JPM  & \textbf{1.228} & \textbf{1.699} & \textbf{+0.471} \\
        BAC  & {1.364} & {1.373} & +0.009 \\
        GS   & 1.273 & 1.282 & +0.009 \\
        AXP  & 1.248 & 1.244 & -0.004 \\
        JNJ  & 0.677 & 0.678 & +0.001 \\
        PFE  & 0.720 & 0.714 & -0.006 \\
        MRK  & 0.724 & 0.725 & +0.001 \\
        PG   & 0.645 & 0.648 & +0.003 \\
        KO   & 0.678 & 0.713 & +0.035 \\
        WMT  & 0.564 & 0.620 & +0.056 \\
        MCD  & 0.785 & 0.789 & +0.004 \\
        BA   & \textbf{1.488} & \textbf{2.388} & \textbf{+0.900} \\
        \bottomrule
    \end{tabular}
    \caption{Comparison of Estimated Market Betas: OLS vs. EFM}
    \label{tab:betas}
\end{table}

\begin{figure}[h]
   % \centering
    \includegraphics[width=\textwidth]{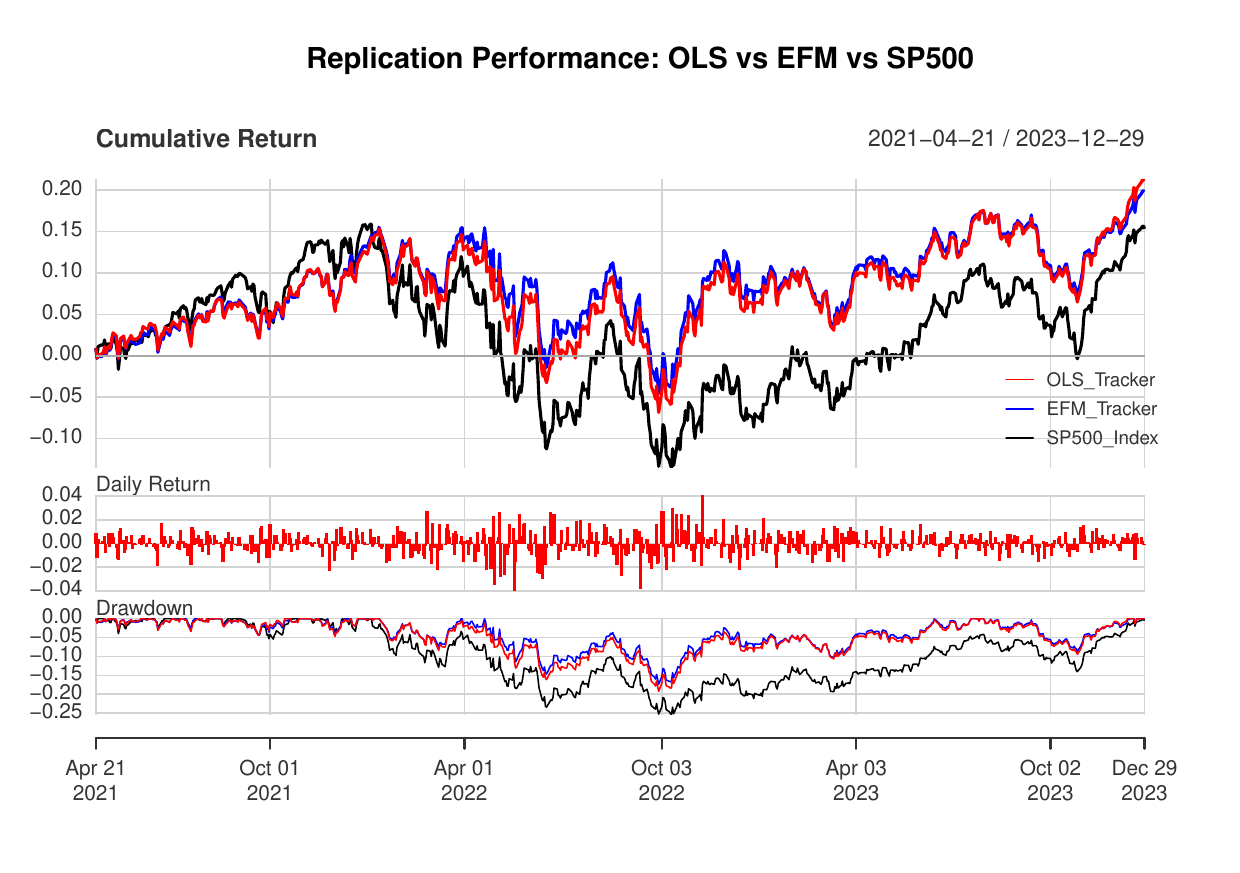} 
    \caption{Replication performace (cummulative return) of OLS vs EFM vs SPY for portfolio of 20 representative assets (Experiment 1)}
    \label{fig:ex1}
\end{figure}

% To print the credit authorship contribution details

%\printcredits

% Biography
%\bio{}
% Here goes the biography details.
%\endbio

%\bio{pic1}
% Here goes the biography details.
%\endbio

\end{document}